\documentclass[sigconf, nonacm]{acmart}

\usepackage{array}
\usepackage{booktabs}
\usepackage{amsmath}
\usepackage{pifont}
\usepackage{xcolor}
\usepackage{tikz}
\usepackage{ifthen}
\usepackage{algorithm}
\usepackage{algpseudocode}
\usetikzlibrary{arrows.meta, positioning, fit, backgrounds, calc}

\renewcommand\footnotetextcopyrightpermission[1]{}
\newcommand{\sig}{\textsuperscript{$\ast$}}
\newcommand{\yes}{\ding{51}}
\newcommand{\no}{--}

\newcommand{\pTenS}{p10s $\cdot$ req}
\newcommand{\logTSmr}{\mbox{logTS-MR} $\cdot$ item}
\newcommand{\logTStr}{\mbox{logTS-TR} $\cdot$ item}
\newcommand{\bucketTS}{\mbox{bucketTS-p75} $\cdot$ item}
\newcommand{\extshareUser}{extShare $\cdot$ viewer}
\newcommand{\seemore}{seeMore $\cdot$ req}

\definecolor{cPred}{HTML}{1F4E9C}   %
\definecolor{cBase}{HTML}{B35300}   %
\definecolor{cGate}{HTML}{228B80}   %
\newcommand{\roleP}[1]{\textcolor{cPred}{#1}}
\newcommand{\roleB}[1]{\textcolor{cBase}{#1}}
\newcommand{\roleG}[1]{\textcolor{cGate}{#1}}
\newcommand{\swatch}[1]{\textcolor{#1}{\rule{0.7em}{0.7em}}}

\newcommand{\gap}[1]{\textcolor{red}{\textbf{[GAP: #1]}}}

\newcommand{\nJD}{11}
\newcommand{\nPairs}{13}
\newcommand{\nJDSig}{7}

\newcommand{\nJDCandsAll}{4}
\newcommand{\nCands}{6}
\newcommand{\nLegs}{26}
\newcommand{\nLegsPos}{17}
\newcommand{\nLegsNeg}{1}
\newcommand{\nLegsSes}{10}
\newcommand{\nLegsSesPos}{5}
\newcommand{\nLegsMech}{16}
\newcommand{\nLegsMechPos}{12}
\newcommand{\nSesPairs}{5}
\newcommand{\nSesJD}{4}

\newcommand{\nDurHidden}{6}
\newcommand{\extShareZero}{89}

\newcommand{\meanAtPct}{82}
\newcommand{\cvRealRatio}{68}

\newcommand{\capOverview}{%
Overview of our methodology. \textbf{(a)} Retrieval and early- and late-stage ranking
each score candidates with a value model (VM), a weighted sum of predictions,
each prediction mixing a \emph{baseline} $b$ with the candidate's \emph{incremental}
excess $\Delta$. We keep the raw prediction and give $\Delta$ its own weight,
with no new model. \textbf{(b)} Viewer and item baselines are precomputed
historical rates (the row and column averages of the grid); the request
baseline is read from the live candidate pool. \textbf{(c)} One online test
ships one point; a separate four-arm ablation tests the two-term form.}

\newcommand{\capBatch}{%
[Offline] \textbf{(a)} Within-request distributions, each normalised to its
own p10--p90 range. \textbf{(b)} Coefficient of variation of each baseline
statistic across simulated requests. The simulation reproduces the ordering the
rule needs but understates absolute instability: real requests vary up to
\cvRealRatio$\times$ more.}

\newcommand{\capOffon}{%
[Offline] Every audited signal on its bias axis. Filled markers carry a term
tested online. Their outcomes are in Table~\ref{tab:main}, except
for item-axis profile tap, whose term is still being iterated. The bucketised
labels (ablated together) are an intervention's \emph{output}, so the video
label's \mbox{$r\approx0$} reflects that intervention working, not a low
reading.}

\begin{document}

\title{Baseline and Incremental: User, Item, and Request Debiasing in
Instagram Feed Ranking}

\author{Hsin-Chiang~Chen\texorpdfstring{$^{\ast}$}{}, Hao~Wang\texorpdfstring{$^{\ast}$}{}, Gedi~Zhou\texorpdfstring{$^{\dagger}$}{}, Mingze~Gao, Abhishek~Kumar, Yiben~Yang, Haoyu~Wang, Hongjing~Xia, Bo~Chen, Zhengyu~Su, Xian~Sun, Chang~Xu, Kevin~Han\texorpdfstring{$^{\ddagger}$}{}, Weiwei~Li, Zhongchang~Sun, Yao~Zhang, Qifei~Wang, Peggy~Yang, Yihan~Wu, Haoran~Liu, Siyang~Sun, Xingyu~Pan\texorpdfstring{$^{\ddagger}$}{}, Shannon~Phu, Haoming~Chen, Kevin~Zhao, Chen~Liang}
\affiliation{\institution{Meta}\country{USA}}
\email{\{hsinchiangchen, wangh, gedi, gaomingze, abi, yibenyang, hwang420, hjx, chenbo, zhengyu, xiansun, changxu109, kevinwh, weiweili90, zsun16, yaozhang, qifeiw, peggyy, yihanwu, ryan0814, siyangsun, xingyupan, shannonphu, haomingc, kevinzhao, cliang7\}@meta.com}

\begin{abstract}
A feed ranker scores candidates with a \emph{value model}, a weighted sum of
engagement predictions. Each prediction conflates a \emph{baseline}, what this
viewer, item or request would produce anyway, with the \emph{incremental}
amount by which this candidate exceeds it. The literature mostly removes the
baseline. We argue that a multi-signal value model should keep the raw
prediction and add the incremental part as a separately weighted term. On
Instagram Feed, three debiasing families deployed independently, for
viewer habit, item popularity and request context, are one parameterised
operator, and a design procedure picks a point in it. Media-type (format) bias
goes instead to a debiased label chosen by the same procedure. In a controlled four-arm
ablation of deployed Feed terms (neither term, each alone, both), the
two-term arm beats both single-term arms at production weights on point
estimates in \nJD{} of \nPairs{} judged (term, metric) pairs. Of the \nLegs{}
direct contrasts, \nLegsPos{} are significantly for it and only \nLegsNeg{}
against it, on impressions. The ablation
also suggests that the offline diagnostic should screen signals, not rank
them. Such
terms have served Instagram Feed production for most of a year. Beyond Feed,
Meta, whose apps serve billions of people, has adopted the operator across
surfaces, apps and both ranking stages. Its launches there drive sessions and
daily active users, the core visitation and retention metrics.
\end{abstract}

\begin{CCSXML}
<ccs2012>
<concept>
<concept_id>10002951.10003317.10003347.10003350</concept_id>
<concept_desc>Information systems~Recommender systems</concept_desc>
<concept_significance>500</concept_significance>
</concept>
<concept>
<concept_id>10002951.10003317.10003338.10003343</concept_id>
<concept_desc>Information systems~Learning to rank</concept_desc>
<concept_significance>300</concept_significance>
</concept>
</ccs2012>
\end{CCSXML}
\ccsdesc[500]{Information systems~Recommender systems}
\ccsdesc[300]{Information systems~Learning to rank}

\keywords{recommender systems, debiasing, value models, feed ranking,
online controlled experiments, popularity bias}

\begin{teaserfigure}
\Description{Three panels: the value model in retrieval and in early- and late-stage ranking with its raw and incremental terms; where the viewer, item and request baselines come from; the design procedure and the four-arm ablation.}
\centering
\begin{tikzpicture}[x=1cm, y=1cm, font=\scriptsize]
\tikzset{
  pane/.style   = {draw=black!22, rounded corners=2pt, fill=black!2},
  ttl/.style    = {font=\small\bfseries, anchor=north west},
  lab/.style    = {font=\scriptsize, anchor=north west, align=left,
                   inner sep=0pt},
  bx/.style     = {draw=black!45, rounded corners=1.5pt, fill=white,
                   font=\scriptsize, align=center, inner sep=1.5pt,
                   minimum width=2.30cm, minimum height=0.50cm},
  ar/.style     = {-{Stealth[length=1.5mm]}, black!55, line width=0.4pt},
  stg/.style    = {draw=black!45, rounded corners=1.2pt, fill=white,
                   font=\scriptsize, minimum width=2.00cm,
                   minimum height=0.36cm, inner sep=1pt},
  vm/.style     = {draw=blue!55!black, rounded corners=1.2pt, fill=blue!8,
                   font=\scriptsize\bfseries, text=blue!55!black,
                   minimum height=0.36cm, inner sep=1pt},
}
\colorlet{cbase}{black!40}
\colorlet{cincr}{blue!58!black}
\def\paneW{5.70}
\def\paneH{3.68}
\def\gap{0.25}
\newcommand{\pane}[2]{%
  \node[pane, minimum width=\paneW cm, minimum height=\paneH cm,
        anchor=north west] at (#1,0.30) {};
  \node[ttl] at ({#1+0.14},0.21) {#2};
}
\pane{0}{(a) the value model: raw + incremental}
\begin{scope}[shift={(0.14,0)}]
  \node[stg, anchor=north west] (f1) at (0.00,-0.40) {retrieval};
  \node[stg, anchor=north west] (f2) at (0.00,-0.92) {early-stage ranking};
  \node[stg, anchor=north west] (f3) at (0.00,-1.44) {late-stage ranking};
  \node[vm, anchor=west] at ([xshift=1mm]f1.east) {VM};
  \node[vm, anchor=west] at ([xshift=1mm]f2.east) {VM};
  \node[vm, anchor=west] at ([xshift=1mm]f3.east) {VM};
  \draw[ar] (f1.south) -- (f2.north);
  \draw[ar] (f2.south) -- (f3.north);
  \draw[ar] (f3.south) -- ++(0,-0.24)
    node[black!65, anchor=north, inner sep=1pt] {feed};
  \node[lab, black!60, font=\scriptsize\itshape] at (2.95,-0.40)
    {item-axis example};
  \def\base{-1.58}
  \foreach \i/\lab/\b/\inc in {0/{A}/0.70/0.14, 1/{B}/0.22/0.50} {
    \pgfmathsetmacro{\bx}{2.98 + \i*0.56}
    \fill[cbase] (\bx,\base) rectangle (\bx+0.38,{\base+\b});
    \fill[cincr] (\bx,{\base+\b}) rectangle (\bx+0.38,{\base+\b+\inc});
    \node[black!65, anchor=north, inner sep=1pt]
      at (\bx+0.19,\base-0.02) {\lab};
  }
  \draw[black!40] (2.90,\base) -- (4.00,\base);
  \node[lab, cincr] at (4.12,-0.76) {incremental $\Delta$};
  \node[lab, cbase] at (4.12,-1.18) {baseline $b$};
  \node[lab, black!70] at (2.95,-1.86)
    {one weight: A always wins\\two weights: the tuner picks};
  \node[anchor=north] at ({(\paneW-0.28)/2},-2.36)
    {$s(c) = \sum_j \big[\, \underbrace{w_j^{\text{r}} p_j(c)}_{\text{raw}}
       \;+\; \underbrace{w_j^{\text{i}} \Delta_j(c)}_{\text{incremental}} \,\big]$};
\end{scope}
\pane{\paneW+\gap}{(b) where the baseline comes from}
\begin{scope}[shift={(\paneW+\gap+0.14,-0.28)}]
  \def\cs{0.32}
  \foreach \r in {0,...,4} {
    \foreach \c in {0,...,5} {
      \pgfmathsetmacro{\v}{20 + 11*mod(\r*3+\c*2,7)}
      \fill[blue!\v!white] ({0.30+\c*\cs},{-0.50-\r*\cs})
                  rectangle ({0.30+\c*\cs+\cs-0.03},{-0.50-\r*\cs-\cs+0.03});
    }
  }
  \node[black!60, anchor=south] at ({0.30+3*\cs},-0.48) {items};
  \node[black!60, rotate=90, anchor=south] at (0.26,{-0.50-2.5*\cs}) {viewers};
  \foreach \r in {0,...,4} {
    \pgfmathsetmacro{\w}{0.12 + 0.05*mod(\r*3+1,5)}
    \fill[teal!62!black] ({0.30+6*\cs+0.05},{-0.50-\r*\cs-0.03})
                rectangle ({0.30+6*\cs+0.05+\w},{-0.50-\r*\cs-\cs+0.03});
  }
  \foreach \c in {0,...,5} {
    \pgfmathsetmacro{\h}{0.10 + 0.05*mod(\c*2+3,5)}
    \fill[orange!70!black] ({0.30+\c*\cs},{-0.50-5*\cs-0.06})
                rectangle ({0.30+\c*\cs+\cs-0.03},{-0.50-5*\cs-0.06-\h});
  }
  \draw[red!70!black, line width=0.7pt, rounded corners=1pt]
    (0.28,{-0.50-2*\cs+0.01}) rectangle ({0.30+6*\cs-0.01},{-0.50-3*\cs+0.01});
  \def\lx{2.95}
  \node[lab, teal!50!black] at (\lx,-0.52)
    {\textbf{viewer}: historical rate,\\precomputed};
  \node[lab, red!60!black] at (\lx,{-0.50-2*\cs+0.02})
    {\textbf{request}: p75 or mean\\of the live candidate pool};
  \node[lab, orange!55!black] at (\lx,{-0.50-5*\cs-0.06})
    {\textbf{item}: historical rate or\\size bucket, precomputed};
\end{scope}
\pane{2*\paneW+2*\gap}{(c) ship a term; ablate the design}
\begin{scope}[shift={(2*\paneW+2*\gap+0.14,0)}]
  \node[bx, anchor=north west] (d1) at (0.00,-0.40)
    {is there a baseline\\component?};
  \node[bx, anchor=north west] (d2) at (0.00,-1.06)
    {which baseline,\\which form?};
  \node[bx, anchor=north west] (d3) at (0.00,-1.72)
    {VM gate or\\debiased label?};
  \node[bx, anchor=north west, draw=blue!55!black, fill=blue!5] (d4)
    at (0.00,-2.38) {\textbf{ship test}: add $\Delta$\\on top of production};
  \draw[ar] (d1.south) -- (d2.north);
  \draw[ar] (d2.south) -- (d3.north);
  \draw[ar] (d3.south) -- (d4.north);
  \def\gs{0.60}\def\gcx{4.05}
  \pgfmathsetmacro{\gx}{\gcx-\gs}
  \def\gy{-1.02}
  \node[black!60, font=\scriptsize\itshape, anchor=north, align=center]
    at (\gcx,-0.36) {ablation test for understanding};
  \node[black!60, anchor=south] at (\gcx,\gy+0.02) {$w^{\text{r}}\!\rightarrow$};
  \node[black!60, rotate=90, anchor=south] at (\gx-0.04,\gy-\gs)
    {$\leftarrow\! w^{\text{i}}$};
  \foreach \i/\j/\lab/\f in {0/0/{none}/{black!8}, 1/0/{raw}/{black!22},
                             0/1/{incr}/{blue!22}, 1/1/{both}/{blue!48}} {
    \fill[\f] ({\gx+\i*\gs},{\gy-\j*\gs})
      rectangle ({\gx+\i*\gs+\gs-0.05},{\gy-\j*\gs-\gs+0.05});
    \node[black!75]
      at ({\gx+\i*\gs+\gs/2-0.025},{\gy-\j*\gs-\gs/2+0.025}) {\lab};
  }
  \draw[red!70!black, line width=0.8pt, rounded corners=1pt]
    ({\gx+\gs-0.02},{\gy-\gs-0.02})
      rectangle ({\gx+2*\gs-0.03},{\gy-2*\gs+0.07});
  \node[red!60!black, anchor=north, align=center]
    at (\gcx-0.03,{\gy-2*\gs-0.06}) {does this interior point\\beat raw and incr?};
\end{scope}
\end{tikzpicture}
\caption{\capOverview}
\label{fig:overview}
\end{teaserfigure}

\let\savednolinkurl\nolinkurl
\def\nolinkurl#1{#1}
\maketitle
\let\nolinkurl\savednolinkurl
\renewcommand{\shortauthors}{Chen et al.}
{\renewcommand\thefootnote{\fnsymbol{footnote}}%
\footnotetext[1]{Both authors contributed equally to this research.}%
\footnotetext[2]{Corresponding author.}%
\footnotetext[3]{This work was performed while these authors were at Meta.}%
}

\section{Introduction}

Instagram Feed scores each candidate $c$ with a \emph{value model} (VM), a
weighted sum
\begin{equation}
s(c) \;=\; \sum_{j} w_j \, p_j(c)
\label{eq:vm}
\end{equation}
of engagement predictions $p_j$. Its weights are tuned against a
few north-star metrics by online Bayesian
optimisation~\cite{letham2019bayesian,feng2025experimenting}. The deployed VM
also multiplies some groups of terms, a structure we set aside here.
Tuned combinations of such predictions are standard at
scale~\cite{zhao2019recommending,borisyuk2024lirank}, and each $p_j$ comes
from a large learned model~\cite{covington2016deep,naumov2019deep}. Value models of this kind
decide what people see on many major social and content platforms. How
they weigh their signals is a central problem of Web-scale recommendation.

Each $p_j$ conflates a \emph{baseline} component, the level this viewer, item
or request would produce anyway, with an \emph{incremental} component,
this candidate's excess over it. On Instagram Feed, the like prediction has a
rank correlation of $0.69$ with the viewer's own historical like rate
(Table~\ref{tab:diaguser}): much of it tracks the viewer's habit, not the
candidate. In a sum of many signals, that habit gives like extra weight on
every candidate this viewer sees, whatever the candidate adds. We say
\emph{viewer} for the user side throughout.

The literature mostly \emph{removes} the baseline, and the methods that keep
part of it lack a general account of why (\S\ref{sec:related}). We argue that
removal is the wrong default here: in a four-arm ablation of \nCands{} deployed
terms, neither the raw nor the incremental term is redundant at production
weights. We show that
the three debiasing families deployed on Feed (viewer, item and request
baselines) are one parameterised operator with four design choices.
Figure~\ref{fig:overview} summarises the argument. Instagram Feed is our case
study throughout.

\smallskip\noindent\textbf{Contributions.} \emph{1. The baseline is not a defect
but a second signal that needs its own weight} (\S\ref{sec:formulation}). A habitual
liker's likes are real engagement, and a broadly loved post deserves reach.
But one coefficient $w_j$ cannot trade the two components off, and the right
trade-off is unknown a priori and differs by signal. We keep the raw
prediction and weight the incremental part separately:
\begin{equation}
s(c) \;=\; \sum_j \Big[\, \underbrace{w_j^{\text{r}}\, p_j(c)}_{\text{raw}}
\;+\; \underbrace{w_j^{\text{i}}\, \Delta_j(c)}_{\text{incremental}} \Big].
\label{eq:twoterm}
\end{equation}

\smallskip\noindent\emph{2. One operator, three baselines}
(\S\ref{sec:formulation}). The viewer and item baselines are historical
averages, one per viewer and one per item; the request baseline is read from
the live candidate pool. They measure different things, so the three coexist
in one VM. Published methods sit along the operator's axes
(Table~\ref{tab:priorpoints}).

\smallskip\noindent\emph{3. A design procedure} (\S\ref{sec:approach}) that
turns a computationally cheap offline diagnostic into a build decision. It settles whether a
signal needs the split, which baseline and functional form to use, whether to
intervene at scoring or label time, and what the term costs to serve.

\smallskip\noindent\emph{4. A controlled ablation}
(\S\ref{sec:deployment}--\S\ref{sec:setup}). We report the Feed deployment
record, then four-arm live experiments that isolate the raw term, the incremental
term and both, on \nCands{} deployed terms. They confirm two preconditions of
the offline diagnostic (\S\ref{sec:discussion}). Beyond Feed, the operator has been
adopted across Meta's ranking surfaces (\S\ref{sec:more}).

\paragraph{Setting.}
\label{sec:background}
Feed mixes posts from followed accounts (\emph{connected} content) with
recommendations. Retrieval and early- and late-stage ranking all score
candidates with value models (Figure~\ref{fig:overview}(a)). We study
late-stage ranking, which scores a few hundred candidates per request, and two
of its value models. MainRank orders
the bulk of the feed and TopRank fills the top of a session.
A multi-task model predicts each signal as a probability or as a
regression target such as expected time spent. The operator branches only on
this distinction. The VM is a formula, not a learned function, so a new term is
a configuration change that ships in hours and reverts as fast. A scoring-time
intervention therefore costs far less to build and test than a modelling one. Every term is
evaluated in a live randomised experiment. Because the weights are tuned
jointly, a new term perturbs the operating point of every other.

\section{Related Work}
\label{sec:related}

\paragraph{Removing the baseline.} Most debiasing work treats the baseline as a
confounder to remove. Inverse propensity weighting corrects selection bias in
recommender feedback~\cite{schnabel2016recommendations} and examination bias in
learning-to-rank~\cite{joachims2017unbiased,ai2018unbiased,agarwal2019estimating}.
Logit adjustment subtracts a scaled log class prior~\cite{menon2021long}, and
causal methods deconfound popularity and related biases~\cite{chen2023bias}.
Industrial rankers remove the baseline at serving time. PAL and YouTube's
shallow tower drop their learned position part~\cite{guo2019pal,zhao2019recommending},
LAFB divides each rating prediction by its mean given viewer--item
familiarity~\cite{ren2026learning}, and AlignPxtr replaces each by
its bias-conditional deviation or quantile~\cite{lin2025alignpxtr}. Label-side
methods, DVR and RAD, instead train on duration- or cohort-relative watch-time
targets~\cite{zheng2022dvr,relativeadvantage2025}.

\paragraph{Keeping part of it.} PDA multiplies a forecast popularity factor
back in at inference~\cite{zhang2021causal}, TIDE keeps a benign quality
component~\cite{zhao2022popularity}, and DICE sums interest and conformity
scores~\cite{zheng2021disentangling}. Others keep the raw score beside a
weighted second term. MACR subtracts a scaled learned bias~\cite{wei2021model},
PPAC adds weighted popularity terms~\cite{ning2024debiasing}, and DecRS and
CD2AN blend biased and debiased scores
convexly~\cite{wang2021deconfounded,chen2022cotraining}. Collaborative filtering
has long carried additive viewer and item
baselines~\cite{koren2008factorization,koren2009matrix}. In each, the kept term
is a device of the particular model. None gives a general account of why a raw
term should stay beside an incremental one, or explores the design space of
\S\ref{sec:formulation} (Table~\ref{tab:priorpoints}). DecRS and PPAC compare
the combined score with its parts
offline~\cite{wang2021deconfounded,ning2024debiasing}, CD2AN deployed its two
endpoints separately online~\cite{chen2022cotraining}, and the four arms here
are tested on live traffic. Closest are MBD~\cite{li2026mbd} and, on YouTube
Shorts, FLVM~\cite{chang2026flvm}. MBD learns cohort
baselines in a ranking-model branch and can add a hinge on the standardised gap
to the VM score. FLVM learns a baseline branch over duration, viewer propensity
and session context; the two paths are complementary offline, but only the
latent score is served. Both are model-based and adopted by training a ranking
model. The scoring-time terms here are instead VM features that go to test as
configuration changes, at essentially no capacity cost.

\begin{table}[!htb]
\caption{Published methods in the design space of Eq.~\ref{eq:operator}. Raw at serving is \emph{dropped} if the bias factor is removed or its features set to missing, and \emph{replaced} if a debiased target or score stands in.}
\label{tab:priorpoints}
\footnotesize
\setlength{\tabcolsep}{3pt}
\begin{tabular}{@{}llll@{}}
\toprule
Method & baseline $b$ & scale $d$ & raw at serving \\
\midrule
Logit adj.~\cite{menon2021long}      & class prior        & logit $-$ log prior & yes, $-\tau\log\pi$ \\
PDA~\cite{zhang2021causal}           & item popularity    & mult.\ (power)   & re-injected \\
MACR~\cite{wei2021model}             & viewer, item branch & raw gap         & yes, $-c\,b$ \\
PPAC~\cite{ning2024debiasing}        & pers./global pop.  & additive         & yes, $+\gamma, \beta$ \\
PAL~\cite{guo2019pal}                & position           & mult.\ factor    & dropped \\
Shallow tower~\cite{zhao2019recommending} & position      & additive logit   & dropped \\
D2Q~\cite{zhan2022deconfounding}     & duration group     & group CDF        & re-mapped \\
DML~\cite{zhang2023leveraging}       & dur./viewer/item group & percentile labels & replaced \\
LAFB~\cite{ren2026learning}          & familiarity        & ratio            & replaced \\
FLVM~\cite{chang2026flvm}            & dur./viewer/session & additive logit  & replaced \\
MBD~\cite{li2026mbd}                & learned cohort     & std.\ gap        & yes; $+$ in one form \\
\midrule
\textbf{This work} & viewer/item/request & logit or raw & yes, two weights \\
\bottomrule
\end{tabular}
\end{table}

\paragraph{Labels, lists and evaluation.} The deployed bucketised labels follow
DML's duration-group long-view label~\cite{zhang2023leveraging}, a thresholded
form of D2Q's within-duration quantile~\cite{zhan2022deconfounding}. We add a
rule for when such a label is worth its cost over a scoring-time gate. Scoring a
candidate against its own result list is familiar from query-level feature
normalisation~\cite{qin2010letor}, Yahoo's list-statistic
re-ranker~\cite{yin2016ranking} and learned listwise
re-rankers~\cite{ai2018learning,pei2019personalized}. Our request baseline is
one statistic of the same prediction, computationally cheap enough to sit in the VM as a
separately weighted term. Uplift and causal-embedding methods target the causal effect of a
recommendation or outcomes under random exposure~\cite{sato2020unbiased,bonner2018causal}; our incremental term
is a lighter-weight observational relative that makes no causal claim. Whether offline
evaluation predicts online outcomes is a long-standing
concern~\cite{covington2016deep,krauth2020offline,gruson2019offline,jannach2020escaping},
and effects read at launch are biased upward by
the selection that launched them~\cite{lee2018winner}. Sections~\ref{sec:setup}--\ref{sec:discussion}
add a per-term view: each term and its raw counterpart are ablated online.

\section{One Operator, Three Baselines}
\label{sec:formulation}

Fix a signal with prediction $p$ and a \emph{baseline} $b$ estimated from the
viewer, the item or the request. The incremental term of Eq.~\ref{eq:twoterm}
is
\begin{equation}
\Delta \;=\; m(p, b)\,\Big[\, \sigma\!\big(k \cdot d(p, b)\big) \;-\; \text{shift} \,\Big],
\qquad \sigma(x) = \tfrac{1}{1+\mathrm{e}^{-x}} ,
\label{eq:operator}
\end{equation}
where the magnitude $m(p, b)$ is the prediction itself, the gap $e = p - b$,
or a constant. Every deployed scoring-time term we discuss is an instance of
it, with the comparison reversed for demotions. Four
choices fix an instance: the baseline $b$, the comparison scale $d$, the
magnitude $m$, and the sharpness and centring $(k, \text{shift})$.
Eq.~\ref{eq:twoterm} places $\Delta$ additively. Where a term was also tried
multiplicatively on the same signal, gating a group of VM terms, the trade-off
differed. Placement is a fifth degree of freedom, which we do not study.

\paragraph{The baseline $b$.} A \emph{viewer} baseline is the viewer's
historical rate for this action. An \emph{item} baseline is the item's rate
across viewers (\emph{popularity} bias) or a structural attribute that
mechanically inflates the signal (\emph{media-type} bias). That attribute is
the format and, within it, its size, such as video duration. The two biases
are routed differently in \S\ref{sec:stage}. A \emph{request} baseline is a
statistic of the predictions over the request's candidates, its \emph{batch}.

\paragraph{The comparison scale $d$.} For probabilities
$d = \operatorname{logit}(p) - \operatorname{logit}(b)$. A regression signal
is compared in its own unit, $d = \hat y - b$: the time-spent gates compare
predicted with average seconds. The choice matters only at finite $k$
(\S\ref{sec:thm-scale}). We use the logit because most per-viewer rates are
tiny and log-odds resolve that tail. A second reason is that base rates differ
widely across signals, while a unitless log-odds gap is
comparable across signals under one $k$.

\paragraph{The magnitude $m$.} With $e = p - b$, promotion terms use
$m = p$, which keeps rewarding a strong candidate in a strong reference class,
or $m = e$, which is purely relative and zero at the baseline. Demotion gates
cannot use $m = p$, which vanishes at the floor, exactly where the penalty
should bite. They use a constant magnitude with the comparison reversed,
$-C\,\sigma\big(k \cdot d(b, p)\big)$, whose factor grows the further the
candidate sits below $b$. A demotion hinge uses $m = e$, i.e.\ $\min(0, e)$.
Missing this asymmetry is a common error when a promotion term is
adapted into a demotion.

\paragraph{$k$ and shift.} With a positive magnitude, $\text{shift}=0.5$ gives
a signed term that is zero at the baseline, and $\text{shift}=0$ a one-sided boost. For $k$, test $k=1$
and the hard-threshold limit $k\to\infty$ first. Where an intermediate slope
was tried against $k=1$, it did not win, and every deployed term sits at one of
the two. By Theorem~\ref{thm:gate} they are two settings of the one dial $k$;
the deployed hinge and gate also differ in $m$.

\paragraph{Hinge and gate.} The deployed linear hinges $w\max(0, e)$ sit at
\mbox{$(m, k, \text{shift}) = (e, \infty, 0)$}. The deployed sigmoid gates
$w\,m\,\sigma(k d)$ sit at $k = 1$, $\text{shift} = 0$, with $m$ the prediction
itself: $p$, or a regression prediction $\hat y$ in its own unit. Hinge and
gate differ in \emph{two} coordinates, $m$ and $k$. The hinge is the limit
of the \emph{same-magnitude} gate $e\,\sigma(ke)$, not of the deployed one
(Theorem~\ref{thm:gate}). Hinge and gate terms are thus treated as one family
in \S\ref{sec:jd}, but $m$ and $k$ are separate choices: the two deployed
time-spent gates share $b$, $d$ and $k$ and differ in $m$.

\paragraph{Deployed instances.} Typical deployed scoring-time forms are
\[
\Delta = p\,\sigma\bigl(\operatorname{logit} p - \operatorname{logit} b\bigr),
\quad
\Delta = m\,\sigma\bigl(\hat y - b\bigr),
\quad
\Delta = \max(0,\, p - b).
\]
All have $\text{shift} = 0$. The first gates a probability on the log-odds gap
($m = p$, $k = 1$); the second, a regression output $p$ mapped back to its own
unit as $\hat y$ ($d = \hat y - b$, $k = 1$, $m = p$ or $\hat y$); the third is
a hinge ($m = e$, $k \to \infty$). The studied terms take these forms
(\S\ref{sec:deployment}), except one that is a debiased label.

\subsection{Why two terms and three baselines}
\label{sec:whyraw}

\paragraph{How a baseline acts.} An item baseline varies across a request's
candidates and reorders them on the signal itself. A viewer or request baseline
is constant within a request, so $\Delta$ keeps the order of $p$ among the
candidates it applies to. It acts by
reweighting the signal against the VM's other signals, request by request, and
by crediting candidates above $b$ more than those below.

\paragraph{Why the raw term stays.} The comparison inside $\Delta$ ignores the
common level of $p$ and $b$: $\sigma$ is bounded and reads only where a
candidate sits relative to $b$. The level of $p$ thus reaches $\Delta$ only
through $m$, and not at all at $m=e$, which sees only the gap $p-b$. Ranking on $\Delta$ alone can therefore fail on
each axis. A heavy liker's genuine like, modestly above a high baseline,
earns only part of its weight against the viewer's other signals. A fluke watch on a short clip outranks a
good long video that merely meets its duration baseline. The best candidates of
a weak and a strong request are gated alike. The raw term is where the level
enters unconditionally. With two independent weights, the tuner can sit anywhere
between the purely raw and purely incremental endpoints rather than commit to
one. This is the a-priori argument \S\ref{sec:jd} tests.

\paragraph{Why three baselines.}\label{sec:threeaxes} The viewer and item
baselines are historical per-viewer and per-item averages. The
request baseline is read from the live candidate pool of one
(viewer,~session), so it tracks what that pool contains. A viewer's
gap between connected and unconnected content, for instance, is invisible to
both averages. Yet in a connected-heavy request it raises the level of a batch
that pools both kinds, against which an unconnected candidate is read.

The request axis carries two effects. \emph{Batch-strength} bias arises
\emph{across} requests: the same score is exceptional in a weak request
and ordinary in a strong one. \emph{Composition} bias arises \emph{within} a
request, from its mix of candidates, as in the connected example. A statistic
of the request's own predictions tracks both. Targeting
different quantities, the three terms are additive and coexist on one signal.
They overlap rather than conflict, and joint tuning resolves it.

\subsection{Three properties of the operator}
\label{sec:properties}

The results below concern the score we construct and make no observational
baseline causal. Parts of them are known in other literatures.
Theorem~\ref{thm:scale} is an instance of maximal
invariance~\cite{lehmann2005testing}, Theorem~\ref{thm:gate}(i) is the binary
case of entropy-regularised prediction~\cite{blondel2020learning}, and
$e\,\sigma(ke) = \mathrm{SiLU}(ke)/k$ is a rescaled SiLU
unit~\cite{elfwing2018sigmoid}. Proofs are in Appendix~\ref{app:proofs}. The $O(1/k)$
bound of Theorem~\ref{thm:gate}(ii), with constant $\approx 0.278$, appears
in~\cite{zhang2024deep}. What is ours is the reading of each for ranking.

\phantomsection\label{sec:thm-scale}
\begin{theorem}[The logit gap is a maximal invariant]
\label{thm:scale}
Let $T_a(x) = \sigma\big(\operatorname{logit}(x) + a\big)$ shift the log-odds
by $a \in \mathbb{R}$. A comparison $D : (0,1)^2 \to \mathbb{R}$ satisfies
\[
D\big(T_a(p), T_a(b)\big) = D(p, b) \quad \text{for every } a \in \mathbb{R}
\]
if and only if
\[
D(p, b) = h\big(\operatorname{logit}(p) - \operatorname{logit}(b)\big)
\]
for some $h : \mathbb{R} \to \mathbb{R}$.
\end{theorem}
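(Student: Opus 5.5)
The plan is to work in log-odds coordinates, where the group action becomes translation and the statement reduces to the elementary fact that a function on $\mathbb{R}^2$ invariant under diagonal translation depends only on the difference of its arguments. Since $\operatorname{logit} : (0,1) \to \mathbb{R}$ is a bijection with inverse $\sigma$, I would set $u = \operatorname{logit}(p)$, $v = \operatorname{logit}(b)$ and define $F(u,v) = D(\sigma(u), \sigma(v))$. Because $\operatorname{logit}(T_a(x)) = \operatorname{logit}(x) + a$, the invariance hypothesis is equivalent to
\[
F(u+a, v+a) = F(u,v) \quad \text{for all } u, v, a \in \mathbb{R}.
\]

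For the ``if'' direction, substitute $D(p,b) = h(\operatorname{logit} p - \operatorname{logit} b)$. Applying $T_a$ adds $a$ to both log-odds, so the difference, and hence the value of $h$, is unchanged. This needs only the identity $\operatorname{logit}(T_a(x)) = \operatorname{logit}(x) + a$.

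For the ``only if'' direction, define $h(t) = F(t, 0) = D(\sigma(t), \tfrac12)$. Given any $(p,b)$, take $a = -v$ in the invariance. This gives
\[
D(p,b) = F(u,v) = F(u - v, 0) = h\big(\operatorname{logit}(p) - \operatorname{logit}(b)\big).
\]
The argument rests on choosing a canonical representative of each orbit, namely the one with $b = \tfrac12$. The key check is that $T_a$ with $a = -\operatorname{logit}(b)$ really lands in $(0,1)^2$ and sends $b$ to $\tfrac12$. This holds because $\sigma$ maps $\mathbb{R}$ onto $(0,1)$ and $\sigma(0) = \tfrac12$.

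There is no substantive obstacle; the only care needed is that $T_a$ is a well-defined bijection of $(0,1)$ for every real $a$, so no boundary cases arise at $0$ or $1$. For the ``maximal invariant'' reading, I would also note that $g(p,b) = \operatorname{logit} p - \operatorname{logit} b$ is itself invariant and separates orbits. If $g(p,b) = g(p',b')$, then $a = \operatorname{logit} b' - \operatorname{logit} b$ maps $(p,b)$ to $(p',b')$. This is exactly the standard definition of a maximal invariant as in~\cite{lehmann2005testing}, and it is the content the ``only if'' direction encodes.
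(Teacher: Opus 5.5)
Your proof is correct and takes essentially the same route as the paper's: you choose $a = -\operatorname{logit}(b)$ to send $b$ to $1/2$, set $h(z) = D(\sigma(z), 1/2)$, and get the converse from the fact that $T_a$ adds $a$ to both logits. Your extra check that the logit gap separates orbits is not in the paper's proof, and it makes the ``maximal invariant'' label explicit.
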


\paragraph{Discussion.} The logit gap carries everything a shift-invariant
comparison can use, for any base rate, so gates on probabilities compare in
log-odds; the ratio $p/b$ is not invariant. Invariance to a common rescaling or
translation gives $\log \hat y - \log b$ or $\hat y - b$ for regression signals. At
$k \to \infty$ every scale collapses to $\mathbf{1}\{d > 0\}$, so the scale of
$d$ matters only at finite $k$. A hinge's raw-unit magnitude $m = e$ keeps the
two terms of Eq.~\ref{eq:twoterm} commensurate.

\phantomsection\label{sec:thm-gate}
\begin{theorem}[The gate is a regularised threshold, the hinge its limit]
\label{thm:gate}
\textup{(i)} For an advantage $d \in \mathbb{R}$ and a temperature $\tau > 0$,
the objective
\[
J_\tau(g; d) = g\,d - \tau\, D_{\mathrm{KL}}\big(\mathrm{Bern}(g) \,\|\, \mathrm{Bern}(1/2)\big),
\qquad 0 < g < 1,
\]
is strictly concave in $g$, with unique maximiser $g^\star = \sigma(d/\tau)$.
\par\noindent\textup{(ii)} For fixed $m$ and every comparison $d \neq 0$,
\[
m\,\sigma(k d) \;\xrightarrow[k \to \infty]{}\; m \cdot \mathbf{1}\{d > 0\}.
\]
At $m = d = e$ the gate $H_k(e) = e\,\sigma(ke)$ tends uniformly to the hinge
$H_\infty(e) = \max(0, e)$:
\[
\sup_{e \in \mathbb{R}} \big| H_k(e) - H_\infty(e) \big|
= \frac{\mathrm{W}(1/\mathrm{e})}{k} < \frac{0.279}{k},
\]
with $\mathrm{W}$ the Lambert $W$ function and $\mathrm{e}$ Euler's number.
\par\noindent\textup{(iii)} At every $k$ the promotion and demotion forms
partition the gap,
\[
e\,\sigma(ke) + e\,\big(1 - \sigma(ke)\big) = e,
\]
mirroring $\max(0, e) + \min(0, e) = e$.
\end{theorem}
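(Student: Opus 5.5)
The statement has three independent parts, and each reduces to elementary calculus on the logistic function.

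For (i), I would write the KL term explicitly: $D_{\mathrm{KL}}(\mathrm{Bern}(g)\,\|\,\mathrm{Bern}(1/2)) = g\log g + (1-g)\log(1-g) + \log 2$. The linear part $g\,d$ does not affect concavity. The second derivative of $-\tau(g\log g+(1-g)\log(1-g))$ is $-\tau/(g(1-g))<0$ on $(0,1)$, so $J_\tau$ is strictly concave. Its derivative is $d - \tau\log\frac{g}{1-g}$, so the first-order condition gives $\operatorname{logit}(g)=d/\tau$, that is, $g^\star=\sigma(d/\tau)$. The derivative tends to $+\infty$ as $g\to0$ and to $-\infty$ as $g\to1$, so this interior point is the unique maximiser.

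For (ii), the pointwise limit is immediate: $\sigma(kd)\to1$ if $d>0$ and $\to0$ if $d<0$, and multiplying by the fixed $m$ preserves the limit. For the uniform bound I would use the symmetry $\sigma(-x)=1-\sigma(x)$. It gives $H_k(e)-H_\infty(e) = -e\,\sigma(-ke)$ for $e>0$ and $e\,\sigma(ke)$ for $e<0$. Hence $|H_k(e)-H_\infty(e)| = |e|\,\sigma(-k|e|)$, which is even in $e$. Substituting $x=k|e|$ turns the supremum into $\frac1k\sup_{x\ge0} x\,\sigma(-x) = \frac1k\sup_{x\ge0}\frac{x}{1+\mathrm{e}^{x}}$. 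The function $\phi(x)=x/(1+\mathrm{e}^x)$ is $0$ at $x=0$, tends to $0$ as $x\to\infty$, and is positive in between, so it attains an interior maximum. Setting $\phi'(x)=0$ gives $1+\mathrm{e}^x = x\mathrm{e}^x$, equivalently $(x-1)\mathrm{e}^{x-1}=\mathrm{e}^{-1}$. Therefore $x^\star = 1+\mathrm{W}(1/\mathrm{e})$, taking the principal branch since $1/\mathrm{e}>0$.

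The step I expect to need the most care is evaluating $\phi(x^\star)$ in closed form. From the stationarity equation, $1+\mathrm{e}^{x^\star} = x^\star \mathrm{e}^{x^\star}$, so $\phi(x^\star)=\mathrm{e}^{-x^\star}$. Writing $w=\mathrm{W}(1/\mathrm{e})$, we have $\mathrm{e}^{-x^\star}=\mathrm{e}^{-1}\mathrm{e}^{-w}$. Since $w\mathrm{e}^{w}=1/\mathrm{e}$, this equals $w$, so $\phi(x^\star)=\mathrm{W}(1/\mathrm{e})$. I would also check that the stationary point is unique, which shows it is the maximum. The numeric bound $\mathrm{W}(1/\mathrm{e})\approx0.2785<0.279$ then follows by checking the sign of $w\mathrm{e}^w - 1/\mathrm{e}$ at $w=0.279$, using that $w\mathrm{e}^w$ is monotone.

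For (iii), factoring gives $e\,\sigma(ke)+e\,(1-\sigma(ke)) = e\,[\sigma(ke)+1-\sigma(ke)] = e$ identically. The hinge identity $\max(0,e)+\min(0,e)=e$ follows by cases on the sign of $e$. It is also the $k\to\infty$ limit of the same identity, since by (ii) $e(1-\sigma(ke)) = e\,\sigma(-ke)\to\min(0,e)$.
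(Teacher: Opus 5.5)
Your proposal is correct and follows the paper's proof essentially step for step. For (i) you use the derivative $d - \tau\operatorname{logit}(g)$; for (ii) the reduction $|H_k - H_\infty| = t/\big(k(1+\mathrm{e}^{t})\big)$ with $t = k|e|$, and the stationarity condition $\mathrm{e}^{t}(t-1)=1$ solved by Lambert $\mathrm{W}$; for (iii) direct algebra. The one minor difference is in evaluating the maximum: the paper writes the value directly as $(t-1)/k$, which follows from stationarity since $t/(1+\mathrm{e}^{t}) = \mathrm{e}^{-t} = t-1$, whereas you reach $\mathrm{W}(1/\mathrm{e})$ through the identity $w\mathrm{e}^{w} = 1/\mathrm{e}$. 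Your extra checks, the unique stationary point and the numeric bound, fill in steps the paper leaves implicit.
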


\paragraph{Discussion.} Part~(i) makes $k = 1/\tau$ an inverse temperature.
By part~(ii), $m$ sets the limit: the hinge at $m = e$, a flat $-C$ demotion
at constant $m$, and a jump of size $m$ at the baseline for the deployed gates. So
$m$ and $k$ are separate choices. A same-magnitude gate and hinge differ only
within a few multiples of $1/k$ of the baseline. A logit gap therefore suits
$k = 1$, while a raw probability gap, often below $10^{-3}$, needs a very
large $k$.

\phantomsection\label{sec:thm-two}
Let $P$ and $Q$ list one signal's $p(c)$ and $\Delta(c)$ over the candidates
$\mathcal{C}$ of a set of requests. With the rest of the VM fixed, the raw
weight alone and both weights reach
\[
\mathcal{S}_{\mathrm{raw}} = \{\alpha P : \alpha \in \mathbb{R}\},
\qquad
\mathcal{S}_{\mathrm{both}} = \{\alpha P + \beta Q : \alpha, \beta \in \mathbb{R}\}.
\]
Ranking within a request is unchanged by adding any vector that is constant
within each request; let $\mathcal{K}$ be the subspace of such vectors.

\begin{theorem}[The two-term form is strictly more expressive]
\label{thm:two}
\[
\mathcal{S}_{\mathrm{raw}} + \mathcal{K} \subseteq \mathcal{S}_{\mathrm{both}} + \mathcal{K},
\quad\text{with equality iff } Q \in \operatorname{span}(P) + \mathcal{K}.
\]
Hence, with the rest of the VM fixed and weights of either sign, the two-term
form reaches score vectors, and generically rankings, that no reweighting of
the raw term reaches.
\end{theorem}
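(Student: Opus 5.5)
The proof is linear algebra in $\mathbb{R}^{\mathcal{C}}$. We treat $\mathcal{S}_{\mathrm{raw}}+\mathcal{K}$ and $\mathcal{S}_{\mathrm{both}}+\mathcal{K}$ as sums of subspaces: $\operatorname{span}(P)+\mathcal{K}$ and $\operatorname{span}(P,Q)+\mathcal{K}$. The inclusion is immediate by taking $\beta=0$: every $\alpha P+\kappa$ with $\kappa\in\mathcal{K}$ is already of the form $\alpha P+0\cdot Q+\kappa$.

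For the equality criterion we argue both directions.
\begin{itemize}
\item \emph{If} $Q\in\operatorname{span}(P)+\mathcal{K}$, write $Q=\gamma P+\kappa_0$. Then any $\alpha P+\beta Q+\kappa$ equals $(\alpha+\beta\gamma)P+(\beta\kappa_0+\kappa)$, which lies in $\operatorname{span}(P)+\mathcal{K}$.
\item \emph{Conversely}, if the two sets are equal, then $Q=0\cdot P+1\cdot Q+0$ lies in the larger set, hence in the smaller one, i.e.\ $Q\in\operatorname{span}(P)+\mathcal{K}$.
\end{itemize}
This settles the displayed statement. It uses that both sides are linear subspaces, which holds because $\alpha,\beta$ range over all of $\mathbb{R}$ (weights of either sign).

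For the ``hence'' clause we must pass from score vectors to rankings. When $Q\notin\operatorname{span}(P)+\mathcal{K}$, the inclusion is strict. So there is a score vector $S=\alpha P+\beta Q$ such that no $\alpha' P+\kappa$ equals $S$. With the rest of the VM fixed at a vector $R$, the full scores are $R+S$ versus $R+\alpha'P$. These differ by more than a within-request constant, so they are not equivalent for within-request ranking.

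Distinct score vectors need not give distinct rankings, since a ranking is coarser than the score up to constants. This is the step I expect to be the main obstacle, and it is why the statement says ``generically''. The plan is to exhibit a request $r$ and two candidates $c,c'$ in it whose order can be flipped. By the non-membership, $Q$ restricted to some request is not affine in $P$ restricted to that request, with a request-specific intercept allowed.

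Consider two cases, both for candidates within $r$.
\begin{itemize}
\item \emph{Some $c,c'$ have $P(c)=P(c')$ but $Q(c)\neq Q(c')$.} Raw reweighting cannot separate them beyond what $R$ does. Choosing $\beta$ of suitable sign and size reverses or creates their order, and no $\alpha'$ can do so.
\item \emph{Otherwise.} Pick three candidates on which $(P,Q)$ is not collinear. A raw-only score orders them monotonically in $P$ once $|\alpha'|$ dominates $R$, or follows $R$ for small $\alpha'$. Adding $\beta Q$ with large $|\beta|$ realises the order of $Q$, which by non-collinearity is not of that monotone form.
\end{itemize}
Strictly, a full proof would note that the set of $(\alpha,\beta)$ producing an unreachable ranking is a nonempty open cone. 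Together with the arguments above, this gives the generic claim. Ties and degenerate $R$ are the non-generic exceptions.
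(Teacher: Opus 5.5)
Your proof of the displayed inclusion and equality criterion is the paper's proof: $\beta=0$ for the inclusion, $Q=\gamma P+\kappa$ giving $\alpha P+\beta Q\in(\alpha+\beta\gamma)P+\mathcal{K}$, and the converse because $Q$ lies in the larger set. The paper stops at score vectors and leaves ``generically rankings'' informal. The problem is in the ranking argument you add.

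\textbf{The reduction to one request is false.} Your step ``by the non-membership, $Q$ restricted to some request is not affine in $P$ restricted to that request'' does not follow. $\operatorname{span}(P)+\mathcal{K}$ allows a separate intercept per request but only one global slope $\gamma$. Suppose $Q|_r=\gamma_r P|_r+\kappa_r$ with $\gamma_r$ varying across requests, for instance $\Delta=p/b$ with a viewer baseline $b$. Then $Q\notin\operatorname{span}(P)+\mathcal{K}$, yet neither of your cases applies. Every within-request order the two-term form produces is already produced by raw with $\alpha'=\alpha+\beta\gamma_r$, so new rankings exist only jointly across requests, and you need a separate argument. Take requests $r_1,r_2$ on which $P$ is non-constant and $\gamma_{r_1}\neq\gamma_{r_2}$. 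The map $(\alpha,\beta)\mapsto(\alpha+\beta\gamma_{r_1},\alpha+\beta\gamma_{r_2})$ is onto $\mathbb{R}^2$, whereas raw is confined to the diagonal. On $r_i$, the weights $w$ giving the limiting order $O_i^{+}$ (as $w\to+\infty$) form a ray $(t_i^{+},\infty)$. Those giving $O_i^{-}$ form $(-\infty,t_i^{-})$, with $t_i^{-}\le t_i^{+}$. For raw to reach both $(O_1^{+},O_2^{-})$ and $(O_1^{-},O_2^{+})$ would force $t_1^{+}<t_2^{-}\le t_2^{+}<t_1^{-}\le t_1^{+}$, a contradiction. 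The two-term form reaches both.

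\textbf{Your second within-request case also needs repair.} Non-collinearity of $(P,Q)$ on three candidates does not make the $Q$-order differ from the $P$-order; take $P=(1,2,3)$, $Q=(1,4,9)$. So ``large $|\beta|$ realises the order of $Q$'' may give an order raw also reaches. Also, raw at intermediate $\alpha'$ is neither $P$-monotone nor the order of $R$. Here is a clean argument. Write $s_i=R_i+\alpha P_i+\beta Q_i$. The three tie-lines $\{s_i=s_j\}$ in the $(\alpha,\beta)$-plane are pairwise non-parallel, by non-collinearity. They are concurrent, since $s_1=s_2$ and $s_2=s_3$ imply $s_1=s_3$. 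They therefore cut the plane into six open sectors realising all $3!$ strict orders. The line $\beta=0$ crosses each tie-line at most once, so it meets at most four sectors. With both repairs, an unreachable ranking exists for every $R$, not only generically. The unreachable parameter set is a union of open sectors with apex at the concurrency point, not a cone at the origin.
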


\paragraph{Discussion.} Viewer and request baselines are constant within a
request, so a linear $\Delta = p - b$ applied to every candidate ranks exactly
like a larger raw weight;
only the gate or hinge nonlinearity takes $Q$ outside
$\operatorname{span}(P) + \mathcal{K}$. On the item axis a linear $\Delta$ gives MACR's
raw-minus-weighted-baseline inference and PPAC's global-popularity
term~\cite{wei2021model,ning2024debiasing}. Because $\sigma$ and the hinge are
nonlinear, $Q \notin \operatorname{span}(P) + \mathcal{K}$ generically, which is
checkable on logged candidates. The theorem says raising the raw
weight is not an equivalent parameterisation, not that a fixed two-term setting
is better or the production point a good one. The direct
$\textsc{both} - \textsc{raw}$ and $\textsc{both} - \textsc{incremental}$
contrasts of \S\ref{sec:jd} test that.

\section{The Design Procedure}
\label{sec:approach}

Eq.~\ref{eq:operator} says what is expressible. The steps below, summarised in
Algorithm~\ref{alg:procedure}, choose a point in it for a new signal, and each
names the axes it applies to. On the request axis the shape measurement of
Step~3 replaces Step~1 but decides only how to build the term.

\begin{algorithm}[t]
\caption{Designing a term for a signal's prediction $p$.}
\label{alg:procedure}
\footnotesize
\begin{algorithmic}[1]
\State \textbf{Step 1} (viewer, item; \S\ref{sec:diagnostic}): quantile $r$ and
  Var\% of $p$ vs.\ each baseline (strong: $r \ge 0.5$, Var\% $\ge 20\%$; weak:
  $r < 0.2$, Var\% $< 10\%$)
\State \quad a weak $r$ on a sparse feature (zero share $> 60\%$) is
  unresolved: lower its priority, do not discard it
\State \textbf{Step 2} (viewer, item; \S\ref{sec:estimate}): estimate $b$ as a
  rate or a format-and-size bucket; choose its window and clamp
\State \textbf{Step 3} (request; \S\ref{sec:tree}): measure the within-request shape
\State \quad right-skewed: promote $\sim$p75, linear hinge; demote p10/p25, sigmoid gate
\State \quad symmetric: promote p75/p90, linear hinge; demote p10/p25, linear hinge
\State \textbf{Step 4} (item; \S\ref{sec:stage}): popularity bias $\to$ VM term;
  media-type bias $\to$ debiased label
\State \textbf{Step 5} (all; \S\ref{sec:cost}): check the cost; add $\Delta$ at one
  point $(b, d, m, k, \text{shift})$ to production and test it
\end{algorithmic}
\end{algorithm}

\begin{table}[t]
\caption{[Offline] Viewer-axis diagnostics. ``Zero'' is the share of impressions with a zero feature, on which the clamp binds. Var\% is computed against the prediction. Feed time spent has no zero share because its feature is in seconds.}
\label{tab:diaguser}
\footnotesize
\begin{tabular}{@{}lrrrr@{}}
\toprule
signal & quantile $r$ & zero \% & Var\% & $n$ \\
\midrule
like & $+0.687$ & 6 & 22.0 & 2.90M \\
comment\_tap & $+0.652$ & 15 & 17.3 & 2.89M \\
reshare\_select\_recipient & $+0.593$ & 40 & 45.4 & 2.67M \\
reshare\_button\_tap & $+0.538$ & 21 & 16.6 & 2.89M \\
profile\_tap & $+0.525$ & 8 & 9.3 & 2.90M \\
save & $+0.512$ & 43 & 17.4 & 2.67M \\
comment & $+0.463$ & 64 & 26.5 & 2.89M \\
feed\_time\_spent & $+0.438$ & \textemdash & 48.3 & 1.56M \\
external\_share & $+0.245$ & 89 & 2.7 & 2.67M \\
follow & $+0.148$ & 71 & 9.8 & 2.67M
\\
\bottomrule
\end{tabular}
\end{table}

\subsection{Step 1 (viewer, item): is the split needed?}
\label{sec:diagnostic}

The \emph{quantile Pearson} $r$ correlates the decile-bucket indices of a
signal's prediction and its baseline feature. This rank statistic is needed
because some baseline rate features carry non-physical values above $1.0$.
Raw Pearson is not interchangeable, as our bands
(Algorithm~\ref{alg:procedure}) are calibrated on the rank form. A high $r$ means the prediction tracks the
baseline rather than anything candidate-specific. A variance ratio,
$\text{Var\%} = \operatorname{Var}(b) / \operatorname{Var}(y)$ for $y$ the
label or the prediction, measures the dependence's \emph{magnitude}, $r$
its \emph{shape}. Their agreement makes a reading reliable. Both
are computed on one week of candidate-level ranking logs, a $1\%$ block-level
sample. Quantile correlations move by about $\pm 0.01$ between runs.

Var\% needs proxy and target to share construct and units. This rules it out
for implicit signals labelled on a different surface from the feature and for
binary labels against a feature in seconds. We report it on the viewer axis
only: on the item axis three signals exceed $100\%$, a precondition violation,
not a property of the signals.

\paragraph{The base rate shapes the reading.} Frequent actions read highest:
like and comment tap, the commonest, lead Table~\ref{tab:diaguser}, and follow,
the rarest, reads lowest. A frequent action becomes a reflex,
which is what a viewer baseline measures; a rare one leaves too little history
to habituate. Part of the pattern is also mechanical: a rare action leaves the
viewer feature mostly at zero, which caps how high a rank-based $r$ can go.
External share reads low although it is not the rarest action, consistent with
its higher friction and a sparse viewer feature ($\extShareZero\%$ zero).
The base rate therefore also decides whether the baseline is resolvable.
Every logit-scale term clamps its baseline away from $0$, and on a sparse
feature the clamp binds for most impressions, where the feature says little
about the viewer. The share of
impressions where it binds, the zero column of Table~\ref{tab:diaguser},
exceeds $60\%$ for three of nine viewer signals. A weak $r$ on such a signal is
\emph{unresolved}, not \emph{weak}.

\subsection{Step 2 (viewer, item): estimating the baseline}
\label{sec:estimate}

On the viewer axis, implicit signals have no rate of their own and share the
viewer's feed time spent as a proxy. On the item axis, a popularity rate needs
a minimum-impression guard and is partly a product of the ranker. It is thus a
prior, not a causal quantity: one more reason the raw term stays. A
\emph{media-type} baseline comes from the item's format-and-size bucket (video
duration, carousel card count, photo OCR word count). Estimated across items, it is
dense where a popularity rate is sparse, which helps \S\ref{sec:stage} route it
through the label. A historical rate needs an aggregation window, trading noise
against staleness. A logit-scale term needs a clamp, setting its behaviour
where the feature is empty. Base rate and zero share narrow both, and the final
choice is made online.

\begin{figure}[t]
\Description{Box plots of within-request prediction shapes, and a bar chart of the coefficient of variation of candidate baseline statistics.}
\centering
\begin{tikzpicture}[x=1cm, y=1cm]

\def\W{3.9}
\def\rowsep{0.37}

\newcommand{\shapebox}[9]{%
  \pgfmathsetmacro{\py}{-#1*\rowsep}%
  \ifthenelse{\equal{#3}{skewed}}%
    {\colorlet{bc}{orange!75!black}}{\colorlet{bc}{teal!65!black}}%
  \node[font=\scriptsize, anchor=east] at (-0.10,\py) {#2};
  \draw[bc!55, line width=0.5pt] ({#4*\W},\py) -- ({#8*\W},\py);
  \fill[bc!22] ({#5*\W},\py-0.115) rectangle ({#7*\W},\py+0.115);
  \draw[bc, line width=0.5pt]
      ({#5*\W},\py-0.115) rectangle ({#7*\W},\py+0.115);
  \draw[bc, line width=1.1pt] ({#6*\W},\py-0.125) -- ({#6*\W},\py+0.125);
  \node[font=\scriptsize, anchor=west, black!70] at (\W+0.14,\py) {#9};
}

\node[font=\scriptsize\bfseries, anchor=south west] at (-1.55,0.46)
  {(a) within-request shape};
\draw[black!22, dashed] (0.5*\W,0.30) -- (0.5*\W,-5*\rowsep-0.18);
\node[font=\scriptsize, black!65, anchor=south] at (0.5*\W,0.30) {mid};
\node[font=\scriptsize, black!70, anchor=south] at (\W+0.30,0.30) {skew};

\input{tables/fig_shape_data}

\node[font=\scriptsize, anchor=north, black!70] at (\W/2,-5*\rowsep-0.20)
  {normalised p10\,--\,p90 range \quad (box: p25\,--\,p75, rule: p50)};

\begin{scope}[shift={(0,-4.80)}]
  \def\CW{4.6}\def\CH{1.25}\def\cvmax{50}
  \def\step{0.72}\def\bw{0.20}

  \newcommand{\cvbar}[4]{%
    \pgfmathsetmacro{\bx}{#1*\step + 0.36}%
    \pgfmathsetmacro{\hs}{#3/\cvmax*\CH}%
    \pgfmathsetmacro{\hy}{#4/\cvmax*\CH}%
    \fill[orange!75!black] (\bx-\bw,0) rectangle (\bx,\hs);
    \fill[teal!60!black]   (\bx,0) rectangle (\bx+\bw,\hy);
    \node[font=\scriptsize, anchor=south, orange!55!black]
      at (\bx-\bw/2,\hs+0.02) {#3};
    \node[font=\scriptsize, anchor=north, black!60] at (\bx,-0.06) {#2};
  }

  \node[font=\scriptsize\bfseries, anchor=south west] at (-1.55,\CH+0.42)
    {(b) stability of the baseline statistic};
  \draw[black!65] (0,0) -- (\CW,0);
  \foreach \v in {10,20,30,40,50} {
    \pgfmathsetmacro{\vy}{\v/\cvmax*\CH}
    \draw[black!12] (0,\vy) -- (\CW,\vy);
    \node[font=\scriptsize, black!65, anchor=east] at (-0.04,\vy) {\v};
  }
  \node[font=\scriptsize, black!70, rotate=90, anchor=south]
    at (-0.44,\CH/2) {CV across requests (\%)};

  \input{tables/fig_cv_data}

  \draw[black!70, dashed, line width=0.4pt]
    (2*\step+0.36,-0.02) -- (2*\step+0.36,\CH+0.22);
  \node[font=\scriptsize\bfseries, black!65, anchor=south]
    at (2*\step+0.36,\CH+0.22) {chosen};

  \begin{scope}[shift={(3.05,\CH+0.40)}]
    \fill[orange!75!black] (0,0) rectangle (0.22,0.14);
    \node[font=\scriptsize, anchor=west] at (0.27,0.07) {right-skewed};
    \fill[teal!60!black] (1.75,0) rectangle (1.97,0.14);
    \node[font=\scriptsize, anchor=west] at (2.02,0.07) {symmetric};
  \end{scope}
\end{scope}

\end{tikzpicture}
\caption{\capBatch}
\label{fig:batch}
\end{figure}

\subsection{Step 3 (request): shape selects the statistic}
\label{sec:tree}

Low-rate binary signals are right-skewed within a request: bounded below by
zero with mean at the base rate, they can spread only upward. Log time spent
and $p_{10s}$, the predicted probability that time spent on the item exceeds
10\,s, are near-symmetric (Figure~\ref{fig:batch}(a)), as are
bucketised labels by construction. \emph{The floor must be de-skewed}: a linear
rule gives a skewed signal's compressed bottom mass nearly identical scores,
while a logit gap separates it. A finite-$k$ term on a probability needs this
on every axis. A promotion hinge is exempt: it is zero below $b$, where the
floor is compressed (\S\ref{sec:thm-scale}).
\emph{The statistic must be stable across requests}, a concern only on the
request axis, where the baseline comes from a small, fresh sample (Figure~\ref{fig:batch}(b)). For a symmetric signal any statistic
will do. For a skewed one the high percentiles are noisy, the tail drags the
mean to around percentile \meanAtPct{}, and the median sits on the floor. That
leaves p75 as the steadiest off-floor statistic. Adding direction gives the
rule in Algorithm~\ref{alg:procedure} (lines 5--6). Only one case needs the sigmoid: demoting a right-skewed signal, whose
floor-compressed candidates separate only in log-odds.

\subsection{Step 4 (item): VM term or debiased label?}
\label{sec:stage}

Instead of a VM term, a baseline can be removed from the label: a candidate is
positive only if it out-performs its own bucket on the bias dimension. A VM
term is a computationally cheap, tunable, reversible configuration change but sees only the
score and one baseline feature. A debiased label adds a new predictor and its
serving cost, but the model learns the correction and trains on the debiased
target. The choice arises only on the item axis and follows the bias present
(Table~\ref{tab:diagitem}). \emph{Popularity} bias gets a VM term: its baseline
is one item-level statistic of the same signal, so a post-hoc gap captures
it. The label is undistorted, and for a binary signal a debiased label is
ill-defined per impression. \emph{Media-type} bias gets a label: the
distortion is in the label, and a high-cardinality non-linear correction in the
VM would entangle with every other term. On a continuous signal, ``out-performs
its own size class'' is well defined. The bucketised labels
(\S\ref{sec:related}) take this branch, the model-side exception to the
scoring-time terms. The two are not exclusive: a VM term can precede the label.

\begin{table}[t]
\caption{[Offline] Item-axis diagnostics for popularity (left) and video duration (right). A further \nDurHidden{} duration signals, all with $|r| < 0.09$, are omitted.}
\label{tab:diagitem}
\footnotesize
\begin{tabular}{@{}lr@{\hskip 1.0em}lr@{}}
\toprule
popularity & $r$ & duration & $r$ \\
\midrule
profile\_tap & $+0.599$ & vc\_ratio\_80 & $-0.418$ \\
reshare\_button\_tap & $+0.521$ & log\_time & $+0.216$ \\
like & $+0.456$ & comment\_ts & $+0.121$ \\
follow & $+0.407$ & ts\_p75\_carousel\_v2 & $+0.105$ \\
feed\_time\_spent & $+0.400$ & reelsiv\_chain\_view\_time & $+0.084$ \\
save & $+0.352$ & ts\_p75\_video\_v2 & $-0.017$
\\
\bottomrule
\end{tabular}
\end{table}

\begin{table*}[t]
\caption{[Online] The six studied terms, all live. \emph{Form}: the deployed incremental term $\Delta$, unweighted (Eq.~\ref{eq:operator}; for bucketTS-p75, debiased labels), coloured by role: \swatch{cPred}~prediction, \swatch{cBase}~baseline, \swatch{cGate}~the comparison that gates it. Elsewhere a term is named signal\,$\cdot$\,axis; TS = time spent, MR/TR = MainRank/TopRank. $^\dagger$Shipped in a bundle; the number is the labels' own for bucketTS-p75 and the bundle's for the other two. \sig{}~as in Table~\ref{tab:main}.}
\label{tab:deployment}
\footnotesize
\setlength{\tabcolsep}{4pt}
\begin{tabular}{@{}ll>{\raggedright\arraybackslash}p{0.625\textwidth}>{\raggedright\arraybackslash}p{0.15\textwidth}@{}}
\toprule
term & axis & form & approved on \\
\midrule
\mbox{bucketTS-p75}$^\dagger$ & item
 & photo: $\Delta = \roleP{\hat P}\big(t \roleG{\,>\,} \roleB{b_{\text{words}}}\big)$; \quad carousel: $\Delta = \roleP{\hat P}\big(t \roleG{\,>\,} \roleB{b_{\text{cards}}}\big)$; \quad video: $\Delta = \roleP{\hat P}\big(t \roleG{\,>\,} \roleB{b_{\text{duration}}}\big)$\newline
   {\scriptsize \roleP{$\hat P$}: predicted probability; $t$: time spent; \roleB{$b$}: p75 of $t$ in the post's bucket of OCR word count, number of cards or duration.}
 & $+0.447\%$\sig{} feed time spent \\
\mbox{logTS-MR} & item
 & $\Delta = \roleP{p}\;\mathopen{\roleG{\sigma\bigl(}}\roleP{\hat t} - \roleB{b}\mathclose{\roleG{\bigr)}}$\newline
   {\scriptsize \roleP{$p$}: predicted log time spent; \roleP{$\hat t$}: predicted time spent; \roleB{$b$}: the post's mean time spent per view.}
 & $+0.665\%$\sig{} impressions \\
seeMore & request
 & $\Delta = \mathopen{\roleG{\max\bigl(0,}}\;\roleP{p} - \roleB{b}\mathclose{\roleG{\bigr)}}$\newline
   {\scriptsize \roleP{$p$}: P(viewer taps ``See more''); \roleB{$b$}: mean of $p$ over the request's candidates; $p$ and $b$ for unconnected candidates only.}
 & $+0.63\%$\sig{} unconnected post interested ratio \\
\mbox{logTS-TR} & item
 & $\Delta = \roleP{\hat t}\;\mathopen{\roleG{\sigma\bigl(}}\roleP{\hat t} - \roleB{b}\mathclose{\roleG{\bigr)}}$\newline
   {\scriptsize \roleP{$\hat t$}: predicted time spent; \roleB{$b$}: the post's mean time spent per view.}
 & $+0.066\%$\sig{} sessions \\
p10s$^\dagger$ & request
 & $\Delta = \mathopen{\roleG{\max\bigl(0,}}\;\roleP{p} - \roleB{b}\mathclose{\roleG{\bigr)}}$\newline
   {\scriptsize \roleP{$p$}: P(time spent on the post $>$ 10\,s); \roleB{$b$}: p75 of $p$ over the request's candidates.}
 & $+0.034\%$\sig{} sessions \\
extShare$^\dagger$ & viewer
 & $\Delta = \roleP{p}\;\mathopen{\roleG{\sigma\bigl(}}\operatorname{logit}\roleP{p} - \operatorname{logit}\roleB{b}\mathclose{\roleG{\bigr)}}$\newline
   {\scriptsize \roleP{$p$}: P(viewer shares the post outside Instagram); \roleB{$b$}: the viewer's own external-share rate.}
 & $+0.030\%$\sig{} sessions \\
\bottomrule
\end{tabular}
\end{table*}

\begin{table}[t]
\caption{[Online] Four-arm ablation. Columns 2--4: each arm relative to \textsc{none}; columns 5--6: the direct contrasts \textsc{both}$\,-\,$\textsc{raw} and \textsc{both}$\,-\,$\textsc{incr}. \sig{}~marks $|\delta|$ exceeding the confidence half-width. \textsc{incr} = \textsc{incremental}; JD = joint dominance. Indented rows are the judged (term, metric) pairs of \S\ref{sec:protocol}.}
\label{tab:main}
\footnotesize
\setlength{\tabcolsep}{2pt}
\begin{tabular}{@{}lrrrrrc@{}}
\toprule
 & \multicolumn{3}{c}{vs.\ \textsc{none}} & \multicolumn{2}{c}{\textsc{both}\,$-$} & \\
\cmidrule(lr){2-4}\cmidrule(lr){5-6}
metric & \textsc{raw} & \textsc{incr} & \textsc{both} & \textsc{raw} & \textsc{incr} & JD \\
\midrule
\rlap{\pTenS} & & & & & & \\
\quad sessions & $+0.029\%$\sig & $+0.035\%$\sig & $+0.047\%$\sig & $+0.018\%$ & $+0.012\%$ & \yes \\
\quad feed time spent & $+1.286\%$\sig & $+0.695\%$\sig & $+1.650\%$\sig & $+0.359\%$\sig & $+0.948\%$\sig & \yes \\
\rlap{\logTSmr} & & & & & & \\
\quad sessions & $+0.054\%$\sig & $+0.042\%$\sig & $+0.083\%$\sig & $+0.029\%$\sig & $+0.040\%$\sig & \yes \\
\quad feed time spent & $+0.203\%$\sig & $+0.196\%$\sig & $+0.353\%$\sig & $+0.150\%$\sig & $+0.157\%$\sig & \yes \\
\quad impressions & $-0.103\%$\sig & $+0.773\%$\sig & $+0.639\%$\sig & $+0.743\%$\sig & $-0.132\%$\sig & \no \\
\rlap{\bucketTS} & & & & & & \\
\quad sessions & $+0.045\%$\sig & $+0.001\%$ & $+0.053\%$\sig & $+0.007\%$ & $+0.052\%$\sig & \yes \\
\quad feed time spent & $+0.174\%$\sig & $+0.182\%$\sig & $+0.318\%$\sig & $+0.143\%$\sig & $+0.135\%$\sig & \yes \\
\rlap{\logTStr} & & & & & & \\
\quad sessions & $-0.008\%$ & $+0.007\%$ & $+0.001\%$ & $+0.009\%$ & $-0.007\%$ & \no \\
\quad feed time spent & $-0.002\%$ & $+0.200\%$\sig & $+0.216\%$\sig & $+0.218\%$\sig & $+0.017\%$ & \yes \\
\quad impressions & $-0.036\%$ & $-0.041\%$\sig & $-0.022\%$ & $+0.015\%$ & $+0.019\%$ & \yes \\
\rlap{\extshareUser} & & & & & & \\
\quad sessions & $+0.047\%$\sig & $+0.067\%$\sig & $+0.088\%$\sig & $+0.041\%$\sig & $+0.021\%$\sig & \yes \\
\quad ext.\ share & $+31.32\%$\sig & $+43.63\%$\sig & $+58.97\%$\sig & $+21.05\%$\sig & $+10.68\%$\sig & \yes \\
\rlap{\seemore} & & & & & & \\
\quad \begin{tabular}[t]{@{}l@{}}unconnected post\\\ \ interested ratio\end{tabular} & $+1.241\%$\sig & $+1.495\%$\sig & $+1.929\%$\sig & $+0.681\%$\sig & $+0.427\%$\sig & \yes
\\
\bottomrule
\end{tabular}
\end{table}
\subsection{Step 5 (all axes): cost, and what to ship}
\label{sec:cost}

Viewer and item baselines are precomputed features with no scoring cost. A
request term pays per request, $O(N)$ in the pool size for a mean and
$O(N\log N)$ for a percentile, computationally cheap for one term but not for every signal.
No scoring-time term adds model inference: each reads predictions and features the ranker
already computes. Measured latency costs, app-start latency included, are a
few milliseconds at most.
The procedure ships one point after one test (Algorithm~\ref{alg:procedure},
line 8). One arm suffices
because the raw term is not in question at ship time. Other points are swept
only for signals worth the traffic. The four-arm design of \S\ref{sec:setup} is
a separate, retrospective study of whether the two-term form was right at all.
No launch answers that question, because none runs the arm without the signal.

\section{Deployment}
\label{sec:deployment}

The baseline--incremental decomposition has been shipping to Instagram Feed
production for most of a year; the six studied terms launched over many months
and are all live. Table~\ref{tab:deployment} gives each term's place in the
design space of Eq.~\ref{eq:operator} and its approval metric. Each term serves all Feed traffic in its ranking stage. All
three axes appear, with gates, hinges and debiased labels. The two feed
time-spent gates read the same prediction against the same item baseline but
differ in magnitude as well as in their VM; they are two related points.
Table~\ref{tab:main} shows that they behave differently.

The launch bar is not uniform: a term is approved on the metric it was built to
move, and the protocol of \S\ref{sec:setup} relies on this. Three of the six
(two within a bundle) cleared a significant, replicated foreground-session
gain. The see-more term, which predicts whether the viewer taps ``See more''
to ask for more posts like a recommended one, shipped on $+0.63\%$ unconnected
post interested ratio
at the cost of $-0.118\%$ feed time spent. The MainRank gate left sessions
flat. It was approved on the rebalancing it produced: $+0.665\%$ impressions
and $+2.21\%$ impressions outside the video viewer, as photos, carousels and
shorter content gained position. The three bucketised labels,
ablated together, shipped within a larger MainRank restructuring. On their own they contributed
$+0.447\%$\sig{} feed time spent and $+0.002\%$ sessions. The table is
therefore not a list of independent session wins.

Coverage, not efficacy, is the binding constraint. Of the Feed VM
signals, fewer than $20\%$ have a matching viewer-side rate feature, about six
have an item-side one, and about five carry a request-level term. Most of the
VM cannot be debiased today for want of a baseline, so the computationally cheap diagnostic of
\S\ref{sec:diagnostic} and its preconditions matter.

\section{The Ablation}
\label{sec:setup}

Each term is ablated in a four-arm online experiment on live traffic. The
evaluation is online because a VM term's effect depends on the other
terms and on how users respond, which offline replay cannot measure. Users are
randomised, each experiment in its own universe (two share one, with disjoint
arms), and each arm gets $3\%$ of traffic. Each
delta is computed by one method on one seven-day period, relative to
\textsc{none}, with $95\%$ intervals. It was the only
week all six experiments were live and ramped; each ran twelve to
seventeen days. Each
experiment crosses the two weights $(w^{\text{r}}, w^{\text{i}})$ of Eq.~\ref{eq:twoterm},
with prod the production weight: \textsc{none} $(0, 0)$, \textsc{raw}
$(\text{prod}, 0)$, \textsc{incremental} $(0, \text{prod})$ and \textsc{both}
$(\text{prod}, \text{prod})$. \textsc{none}, the common origin, is the control
for every reported delta. \textsc{both} is production for every reported term
except that \textsc{raw} was production for \extshareUser{} during the
measurement window. Its incremental term launched the day after the window
closed.
As comparators, \textsc{raw} is the status quo without debiasing,
\textsc{incremental} the removal design common in the literature
(\S\ref{sec:related}), and \textsc{both} ours. $\textsc{both} - \textsc{raw}$
adds the incremental term at its production weight to a VM that already
carries the raw term, and $\textsc{both} - \textsc{incremental}$ the converse.
Neither single-term arm was re-tuned. This favours \textsc{both} over
\textsc{incremental}, whose weight was set beside the raw term. The design
thus tests whether either term is redundant at the deployed point, not whether a
re-tuned single-term VM could match it. Six experiments were run, one for each
term live in production and designed through the procedure of
\S\ref{sec:approach}.

\paragraph{Evaluation protocol.}
\label{sec:protocol}
Judging terms on a full metric panel invites false positives. The six
experiments give $270$ cells, and $95\%$ intervals alone would flag about
fourteen of them by chance. We therefore fixed, per term, the metrics it would be
judged on: the criteria it would have been launched on,
assigned as in Table~\ref{tab:main}. No significant cell outside that
assignment is used as evidence.
For five of the six terms the primary metric is foreground sessions. It is the
best available proxy for daily actives and, because a session is a separate act
of returning, the hardest to inflate by re-ordering. At Instagram's scale, even
a small relative gain in sessions is practically meaningful. Each term is also judged on
its \emph{mechanism} metric, the behaviour it is built to change. A term whose
mechanism metric does not move has not done its job. The two feed time-spent
gates are judged on feed time spent and impressions, since shifting attention
across candidates of different length can cash out as time spent or as
impressions. The sixth term, \seemore{}, is a content-quality signal whose launch
criterion was the unconnected post interested ratio. It is judged on that
alone, hence \nPairs{} pairs rather than fourteen. Both of its session contrasts
are negative and not significant.

\paragraph{Joint dominance.} A (term, metric) pair is \emph{jointly dominant}
when the point estimates satisfy $\textsc{both} > \textsc{raw}$ and
$\textsc{both} > \textsc{incremental}$. Each is read from the direct arm-to-arm
contrast, never from a difference of two deltas against
\textsc{none}. We use point estimates because sessions' half-width is about
$\pm 0.020\%$ against direct session contrasts of at most about $0.05\%$, so
requiring significance on both contrasts would set aside most session-judged
pairs. We report significance throughout,
counting significantly positive and negative contrasts separately, so a
stricter reader can re-judge.

\paragraph{Results.}
\label{sec:jd}
Of the \nPairs{} (term, metric) pairs, \nJD{} are jointly dominant
(Table~\ref{tab:main}). Each of the \nCands{} terms is jointly dominant on at
least one of its metrics, and \nJDCandsAll{} of \nCands{} on all of them. Of the
\nLegs{} direct contrasts, \nLegsPos{} are significantly positive and \nLegsNeg{}
is significantly negative. The positive side includes $+21.05\%$ on external
shares and $+0.359\%$ on feed time spent, against arms that already carry one of
the two terms at production weight. Power differs by metric family. On the
mechanism metrics, \nLegsMechPos{} of \nLegsMech{} contrasts are significantly
positive. On sessions, \nLegsSesPos{} of \nLegsSes{} are, and \nSesJD{} of the
\nSesPairs{} session-judged terms are jointly dominant on sessions. Session
verdicts therefore rest more often on non-significant contrasts.

\textsc{raw} and \textsc{incremental} are \emph{restrictions} of \textsc{both},
each with one weight set to zero, so \textsc{both}'s reachable set contains them
by construction. The ablation tests whether the extra degree of freedom is
\emph{used}, that is, whether the interior point, with both weights non-zero,
beats both boundary points, each with one weight zero. On \nJD{} of \nPairs{}
pairs it does: at the deployed operating point neither term is redundant. The \nCands{} terms are a case series, not a
sample, so we draw no cross-term rank correlations. Every load-bearing claim
sits within one term.

\paragraph{Why not simply raise the raw weight?}
\label{sec:whynotraw}
\looseness=-1
The objection is that \textsc{both} puts more total weight on the signal than
\textsc{raw}, so a re-tuned $w^{\text{r}}$ alone might reach the same place. The
ablation has four points, not a weight sweep. Theorem~\ref{thm:two} addresses the
formal half.
Whether the production point is good rests on how the VM is maintained. Its
weights are the standing output of a continuous tuning loop. Each term in
Table~\ref{tab:deployment} passed launch review against the full metric panel and its
guardrails, and has since been re-examined by boost and ablation experiments.
Raising one weight is that loop's simplest and most frequent experiment.
\textsc{raw} at production weight is therefore close to the best the
single-term form has reached under the same bar. A lift of $w^{\text{r}}$ alone
that bought what \textsc{both} buys would likely already have shipped. The argument
is \emph{local}: it says the production point is hard to improve by moving one
weight, not that it is globally best. It holds more weakly for $w^{\text{i}}$,
new on most of these signals and far less tuned, so if anything the
ablation understates the two-term form. Before the four-arm design, earlier
viewer-axis experiments ran the sweep this objection asks for: one signal at a
time, each debias experiment paired with a matched weight-only one. The
debiased arm consistently converted to sessions where the weight-only arm did
not.

\paragraph{The two misses.}
\label{sec:failures}
\looseness=-1
Two of the \nPairs{} pairs are not jointly dominant. \logTStr{} on sessions
misses by $\textsc{both}-\textsc{incremental} = -0.007\%$, well inside its
half-width, and is jointly dominant on both its mechanism metrics.
\logTSmr{} on impressions gives the study's only significantly negative
contrast, $\textsc{both}-\textsc{incremental} = -0.132\%$\sig{}. On impressions
the raw term alone is negative ($-0.103\%$\sig{}), while the incremental term
alone gives the largest impression effect in the table ($+0.773\%$\sig{}). On
feed time spent both single terms are positive and close, and \textsc{both}
beats each. The raw term buys time spent; the incremental term changes what is
shown, toward shorter and non-video content. This is a real trade-off between
two objectives, which separately weighted terms expose and a single coefficient
would hide. It is also the trade-off the term launched on: the MainRank gate
was approved on impressions, its TopRank counterpart on sessions
(Table~\ref{tab:deployment}).

\section{Lessons from Deployment}
\label{sec:discussion}

\begin{figure}[t]
\Description{Offline correlation of every audited signal on its bias axis, with tested signals marked.}
\centering
\begin{tikzpicture}[x=1cm, y=1cm]

\def\xlo{-0.50}\def\xhi{0.75}\def\W{6.6}
\newcommand{\xc}[1]{{(#1 - \xlo) / (\xhi - \xlo) * \W}}
\def\rowsep{0.92}

\newcommand{\signalpt}[4]{%
  \pgfmathsetmacro{\px}{\xc{#2}}%
  \pgfmathsetmacro{\py}{-#1*\rowsep}%
  \ifthenelse{\equal{#3}{none}}{%
    \fill[black!28] (\px,\py) circle (1.7pt);
  }{%
    \fill[blue!60!black] (\px,\py) circle (3.1pt);
  }%
  \ifthenelse{\equal{#3}{tested}}{%
    \node[font=\scriptsize\bfseries, blue!45!black, anchor=south]
      at (\px,\py+0.13) {#4};
  }{}%
  \ifthenelse{\equal{#3}{testedlo}}{%
    \node[font=\scriptsize\bfseries, blue!45!black, anchor=north]
      at (\px,\py-0.13) {#4};
  }{}%
}

\foreach \r/\name in {0/viewer, 1/item: popularity, 2/item: duration} {
  \draw[black!30] (0,-\r*\rowsep) -- (\W,-\r*\rowsep);
  \node[font=\scriptsize\itshape, anchor=south west, black!70]
    at (0,-\r*\rowsep+0.30) {\name};
}
\draw[black!65] (\xc{0},0.30) -- (\xc{0},-2*\rowsep-0.30);
\node[font=\scriptsize, black!65, anchor=north] at (\xc{0},-2*\rowsep-0.30) {$0$};

\foreach \t in {-0.4,-0.2,0.2,0.4,0.6} {
  \pgfmathsetmacro{\tx}{\xc{\t}}
  \draw[black!35] (\tx,-2*\rowsep-0.06) -- (\tx,-2*\rowsep-0.16);
  \node[font=\scriptsize, black!65, anchor=north] at (\tx,-2*\rowsep-0.18) {\t};
}
\node[font=\scriptsize, anchor=north] at (\W/2,-2*\rowsep-0.52)
  {quantile Pearson $r$ \; (prediction vs.\ baseline feature)};

\input{tables/fig_offon_data}

\begin{scope}[shift={(0,0.95)}]
  \fill[blue!60!black] (0.10,0) circle (2.8pt);
  \node[font=\scriptsize, anchor=west] at (0.26,0)
    {carries a term with an online result (Table~\ref{tab:main})};
  \fill[black!28] (4.85,0) circle (1.7pt);
  \node[font=\scriptsize, anchor=west] at (4.99,0) {not tested yet};
\end{scope}

\end{tikzpicture}
\caption{\capOffon}
\label{fig:offon}
\end{figure}

\looseness=-1
\vspace{2pt}\noindent\emph{L1: the procedure's first choices held.} Where Table~\ref{tab:main} can check the
procedure's rules, their first choices held. The shape rule of \S\ref{sec:tree}
puts a p75 promotion hinge on \pTenS{}, a near-symmetric signal, and the term
is jointly dominant on both its metrics. The label rule of \S\ref{sec:stage}
sends the media-type-biased time-spent family to a bucketised label.
\bucketTS{} is jointly dominant on both its metrics, with every feed time-spent
arm and contrast significant. Offline, the video label retains almost no
duration correlation ($r \approx -0.017$, against $+0.216$ for the raw log-time
prediction; Table~\ref{tab:diagitem}). The photo and carousel labels are
bucketed on dimensions we did not audit, so this check covers one label of
three. \seemore{} uses a batch \emph{mean} where the shape rule prescribes
$\sim$p75. It is jointly dominant on its mechanism metric and converts to
nothing on sessions. On the request axis, the one recipe off the rule is the
one mechanism-only result.

\looseness=-1
\vspace{2pt}\noindent\emph{L2: the VM saturates around a term.} A term's value depends on the
VM around it, which keeps moving. \logTStr{} launched on a replicated
$+0.066\%$ session win, but in this window its $\textsc{both}-\textsc{raw}$ on
sessions is $+0.009\%$ (n.s.), its session effects are flat on every arm and
its impression effects nearly so. On feed time spent the incremental term is still strong ($+0.200\%$\sig{};
\textsc{both}~$+0.216\%$\sig{}; $\textsc{both}-\textsc{raw}$~$+0.218\%$\sig{})
and the raw term is at zero ($-0.002\%$, n.s.). The mechanism persists, but its
conversion into sessions did not survive eight weeks of VM retuning since
launch. A sessions-only read would have called the term useless. The viewer
terms on like, reshare recipient and comment, all reading above $0.45$, have
never produced a stable win, which we attribute, without direct evidence, to
weights already tuned against the raw signal. None is in this ablation. Offline baseline is not
scarce on Feed; \emph{unsaturated} baseline is.

\looseness=-1
\vspace{2pt}\noindent\emph{L3: screen with the offline diagnostic, do not rank with it.} As
triage, the diagnostic did its job: it would not have screened out any tested
viewer- or item-axis term in Table~\ref{tab:main} (filled markers in Figure~\ref{fig:offon}).
One of them read ninth of ten on its axis. Offline queries are computationally cheap and experiment
slots scarce, so the offline stage should be a \emph{high-recall filter}, not a
ranking. It must not discard a signal that would have worked, and may pass more
than will succeed.

\looseness=-1
\vspace{2pt}\noindent\emph{L4: a reading needs resolvability and headroom.} The diagnostic's
two limits, which we anticipated before the ablation and
which it confirmed, should be checked in advance as preconditions alongside the correlation.
They change a signal's priority without discarding it. A sparse baseline feature gives an
\emph{unresolved} reading rather than a genuinely low one
(\S\ref{sec:diagnostic}). The one viewer-axis term in Table~\ref{tab:main},
significant on all four of its contrasts, is such a signal. Where its rate
feature is zero the clamp binds, and the gate becomes a fixed function of $p$:
extra raw weight. Above the clamp it still compares each candidate with the
viewer's own baseline. The clamp level sets the share of impressions in each
regime, a lever between raw weight and debiasing; both components remain. And a high reading
says a baseline component exists, not that the VM still has room to exploit
it, as saturation shows. Nor does the diagnostic see a term's form or the
surrounding weights: the two feed time-spent gates share one reading
($r = 0.400$) yet differ in magnitude and VM (Table~\ref{tab:deployment}) and
behave differently in MainRank and TopRank. Coverage is thin (Figure~\ref{fig:offon})
and not concentrated where readings are strongest. Further terms on the
untested high-reading viewer signals and the item axis are being tested. The
strongest item-axis reading, profile tap, is producing promising early results
in an experiment still being iterated on. Until these read out, the
quantitative offline-to-online relationship is open. A study of \nCands{}
terms cannot pin it down.

\looseness=-1
\vspace{2pt}\noindent\emph{L5: judge time spent where the ranker acts.}
\pTenS{} moves overall time spent several times less
than feed time spent. \bucketTS{}'s incremental term is significant on feed
time spent but not on overall time spent. Overall time spent aggregates surfaces the
Feed ranker does not control, diluting real effects, so every time-spent
judgement here uses feed time spent.

\section{More Deployment at Meta}
\label{sec:more}

The ablation is confined to Instagram Feed; the operator is not. To show
reach, not to evaluate, Table~\ref{tab:reels} lists a subset of its Reels
launches, made over roughly a year, with each launch's own online result, none
re-analysed. Both ranking
stages and all three axes appear (no weight-only retunes). Furthermore,
Threads has launched debiased signals in late-stage ranking, and Facebook is
testing debiased terms in ongoing experiments.

\begin{table}[!ht]
\caption{[Online] A subset of debiasing launches on the Instagram Reels tab. ESR/LSR = early-/late-stage ranking. Sessions and DAU are each launch's own online result. \sig{} is defined as in Table~\ref{tab:main}.}
\label{tab:reels}
\footnotesize
\setlength{\tabcolsep}{3.5pt}
\begin{tabular}{@{}l>{\raggedright\arraybackslash}p{0.34\columnwidth}l>{\raggedright\arraybackslash}p{0.42\columnwidth}@{}}
\toprule
stage & term & axis & approved on \\
\midrule
LSR & debiased $p_{\text{like}}$ & viewer
    & $+0.083\%$\sig{} sessions, $+0.042\%$\sig{} DAU \\
LSR & watch-time gating & item
    & $+0.049\%$\sig{} sessions, $+0.026\%$\sig{} DAU \\
ESR & watch-time gating & item
    & $+0.029\%$\sig{} sessions \\
ESR & batch $p_{\text{save}}$ champion boost & request
    & $+0.056\%$\sig{} sessions, $+0.032\%$\sig{} DAU \\
ESR & log-time gating & item
    & $+0.045\%$\sig{} sessions \\
ESR & $p_{\text{like}}$ against viewer like rate & viewer
    & $+0.075\%$\sig{} sessions, $+0.035\%$\sig{} DAU \\
LSR & content-debiased skip demotion & item
    & $+0.033\%$\sig{} sessions \\
\bottomrule
\end{tabular}
\end{table}

\section{Conclusion}

\looseness=-1
The Instagram Feed deployments argue for separating, not removing, at least at
the deployed operating point: keep the raw prediction, weight the incremental
part separately, and let the tuner choose. The three debiasing families are one operator with four degrees
of freedom, with published methods along its axes. Its baselines coexist
because two are historical averages and one is read from the live candidate
pool. In a four-arm ablation of \nCands{} deployed terms, at production
weights, on point estimates, neither term is redundant in \nJD{} of \nPairs{}
judged pairs (\nJDSig{} of \nPairs{} with both contrasts significant). The
transferable lesson: a computationally cheap offline diagnostic should screen
signals, not rank them. It says where a baseline component exists; an online
test says whether the ranker still has room to exploit it.

\bibliographystyle{ACM-Reference-Format}
\bibliography{refs}

\appendix
\section{Proofs}
\label{app:proofs}

\begin{proof}[Proof of Theorem~\ref{thm:scale}]
If $D$ is invariant, $a = -\operatorname{logit}(b)$ sends $b$ to $1/2$ and $p$
to $\sigma\big(\operatorname{logit}(p) - \operatorname{logit}(b)\big)$, so
$h(z) = D\big(\sigma(z), 1/2\big)$ works. Conversely, $T_a$ adds $a$ to both
logits and leaves their difference unchanged.
\end{proof}

\begin{proof}[Proof of Theorem~\ref{thm:gate}]
(i) $\partial J_\tau / \partial g = d - \tau \operatorname{logit}(g)$
decreases strictly and vanishes only at $g = \sigma(d/\tau)$.
(ii) The limit is that of $\sigma(kd)$. With $t = k|e|$,
$|H_k - H_\infty| = |e| / \big(1 + \mathrm{e}^{k|e|}\big)
= t / \big(k(1 + \mathrm{e}^{t})\big)$, maximal where
$\mathrm{e}^{t}(t - 1) = 1$, i.e.\ $t - 1 = \mathrm{W}(1/\mathrm{e})$, with
value $(t - 1)/k$. (iii) is immediate.
\end{proof}

\begin{proof}[Proof of Theorem~\ref{thm:two}]
The inclusion is $\beta = 0$. If $Q = \gamma P + \kappa$ with
$\kappa \in \mathcal{K}$, then
$\alpha P + \beta Q \in (\alpha + \beta\gamma) P + \mathcal{K}$; conversely,
equality gives
$Q \in \mathcal{S}_{\mathrm{both}} + \mathcal{K} = \mathcal{S}_{\mathrm{raw}} + \mathcal{K} = \operatorname{span}(P) + \mathcal{K}$.
\end{proof}

\end{document}